\newtheorem{theoA}{\bf Theorem}[section]
\newtheorem{definition-theorem}[theoA]{Definition-Theorem}
\newcommand{\net}{\mathfrak N} % quantum network
\newcommand{\op}[1]{\mathsf{Op}(#1)} % group of transformations
\newcommand{\Uni}{\text{U}} % Unitary Group
\newcommand{\Uf}{\mathsf U} % Functional representation
\renewcommand{\dim}{\text{Dim}} % Redéfinir dim -> Dim
\newcommand{\suchthat}{\colon} %such that
\newcommand{\bbR}{\mathbb R} %Reals
\newcommand{\bbC}{\mathbb C}
\newcommand{\bol}{\boldsymbol} %boldsymbol
\newcommand{\vect}[1]{\boldsymbol{#1}} % vector
\newcommand{\deq}{\stackrel{\text{df}}{=}} %deq
\newcommand{\disand}{\qquad \text{and}\qquad} % and (display mode)
\newcommand{\one}{\mathds{1}}
\newcommand{\calO}{\mathcal{O}}
\newcommand{\tr}{\mathrm{tr}_}
\newcommand{\hil}{\mathcal{H}^}
\newcommand{\bra}[1]{\langle #1 |}
\newcommand{\ket}[1]{| #1 \rangle}
\newcommand{\ketbra}[2]{| #1 \rangle \langle #2 |}
\newcommand{\braket}[2]{\langle #1 | #2 \rangle}
\newcommand{\comp}[1]{\overline{#1}}
\newcommand{\cnot}{\text{CNOT}}
\def \be {\begin{equation}}
\def \ee {\end{equation}}
\def \bes {\begin{equation*}}
\def \ees {\end{equation*}}
\def \baa {\begin{align}}
\def \eaa {\end{align}}
\def \baas {\begin{align*}}
\def \eaas {\end{align*}}
\def \bea {\begin{eqnarray}}
\def \eea {\end{eqnarray}}
\def \beas {\begin{eqnarray*}}
\def \eeas {\end{eqnarray*}}
\begin{document}

%%%% Article title to be placed here
\title{The Cost of Quantum Locality}

\author{%%%% Author details
Charles Alexandre B\'edard}
\affil{\small {Universit\`a della Svizzera italiana}\\
\footnotesize \emph{charles.alexandre.bedard@usi.ch}}

\maketitle
%%%% Abstract text to be placed here %%%%%%%%%%%%
\begin{abstract}
It has been more than 20 years since Deutsch and Hayden demonstrated that quantum systems can be completely described locally --- notwithstanding Bell's theorem.
More recently, Raymond-Robichaud proposed two other approaches to the same conclusion. 
In this paper, all these means of describing quantum systems locally are proved formally equivalent.
The cost of such descriptions is then quantified by the dimensionality of their underlining space. The number of degrees of freedom of a single qubit's local description is shown to grow exponentially with the total number of qubits considered as a global system. 
%
%The number of parameters required to specify the local description of a single qubit is exponential in total number of qubits considered 
This apparently unreasonable cost to describe such a small system in a large Universe is nonetheless shown to be expected.
Finally, structures that supplement the universal wave function are investigated.
%Finally, the methods are generalized to continuous systems.
\end{abstract}

%%%%%%%%%%%%%%% End of first page %%%%%%%%%%%%%%%%%%%%%

\section{Introduction}\label{secintro}

It is still a widespread belief that a complete description of a composite entangled quantum system cannot be obtained by descriptions of the parts, if those are expressed independently of what happens to other parts. This apparently holistic feature of entangled quantum states 
%is then pointed out as \emph{the} famous feature of quantum mechanics. Notably,
entails 
violation of Bell inequalities~\cite{bell1964, aspect1982experimental} and quantum teleportation~\cite{bennett1993teleporting}, which are repeatedly invoked to sanctify the ``non-local'' character of
 quantum theory. But this widespread belief has been proven false more than twenty years ago by Deutsch and Hayden~\cite{deutsch2000information}, who by the same token, provided an entirely local explanation of Bell-inequality violations and teleportation.

%Consider a composite quantum system~$AB$ shared between Alice and Bob. 
Descriptions of dynamically isolated --- but possibly entangled --- systems~$A$ and~$B$ %and~$AB$
are \emph{local}\footnote{
After Bell, it has become conventional wisdom to equate locality with a possible explanation by a local hidden variable theory. However, local hidden variables are only one way in which locality can be instantiated~\cite{brassard2019parallel}. Here, locality is taken in the spirit of Einstein: ``the real factual situation of the system~$S_2$ is independent of what is done with the system~$S_1$, which is spatially separated from the former'' \cite{schilppalbert1970}.   }
 if that of~$A$ is unaffected by any process system~$B$ may undergo, and \textit{vice versa}. 
  The descriptions are \emph{complete} if they can predict the distributions of any measurement performed on the whole system~$AB$. For instance, if $AB$ is in a pure entangled state~$\ket{\Psi}^{AB}$, the reduced density matrices
\bes
\rho^A= \tr{B} \ketbra{\Psi}{\Psi} \disand \rho^B = \tr{A} \ketbra{\Psi}{\Psi}
\ees
are local but incomplete descriptions. If instead the descriptions of~$A$ and~$B$ are both taken to be the global wave function~$\ket{\Psi}^{AB}$, then one finds a complete but non-local account.

Following Gottesman's~\cite{gottesman1999heisenberg} quantum computation in the Heisenberg picture, Deutsch and Hayden define so-called \emph{descriptors} for individual qubits. %, which can be intuited to encode the quantum information of a qubit in a Heisenberg-picture-inspired object. %a Heisenberg counterpart of a quantum state.
 Such a mode of description is shown to be both local and complete, hence vindicating the locality of quantum theory. 
 %More recently, Raymond-Robichaud (RR) has shown that any non-signalling theory with reversible operations can be reformulated in terms of so-called noumenal states, which also satisfy all the desirable properties~\cite{brassard2017equivalence}. %\cite[Chapter 3]{raymond2018equivalence}.
 More recently, Raymond-Robichaud reached the same conclusion by two different routes. In a generic approach, he showed~\cite{raymond2018equivalence, brassard2017equivalence} that any non-signalling theory with reversible operations can be reformulated in terms of so-called noumenal states, which describe the systems in a local and complete way. 
%More recently, Brassard and Raymond-Robichaud have showed~\cite{brassard2017equivalence} that any non-signalling theory with reversible operations can be reformulated in terms of so-called~\emph{noumenal states}, which also satisfy the desirable properties.
 Being a special case of such non-signalling theories, quantum theory then inherits what I shall refer to as \emph{quantum noumenal states}. Raymond-Robichaud's other approach~\cite{raymond2018equivalence, raymond2020local} %--- although chronologically first ---
addresses the question of locality directly within quantum theory by expressing local descriptions of quantum systems in terms of yet another object, \emph{evolution matrices}. In the next section, %~\S\ref{secpre}, 
the formalisms of descriptors, noumenal states and evolution matrices are overviewed. 
%
 %did not involved the quantum instantiation of the abstract noumenal states, but rather an~\emph{a priori} different object, the \emph{evolution matrix}\footnote{In this paper, the term ``noumenal states'' will strictly refer to the object defined by an equivalence class. ``Quantum noumenal states'' are then the quantum counter-part, \emph{a priori} different from evolution matrices.}.    
 %as prescribed by Raymond-Robichaud in Refs.~\cite{raymond2020local}. 
 %\cite[Chapter 4]{raymond2018equivalence}. %\com{The prescription is not per se an instanciaiton of the abstract formulation, but results of section... make it as such.}
%\begin{table}[h!]
\begin{center}
\begin{tabular}{|l|c|c|}
\hline 
\centering{\vphantom{$\int_A^B$}Mode of description of quantum systems} & Local & Complete \\
\hline \hline
\vphantom{$\int_A^B$}Reduced density matrices~$\rho^A$ and~$\rho^B$  & Yes & No \\
\hline
\vphantom{$\int_A^B$}Global wave function~$\ket{\Psi}^{AB}$ &No& Yes \\
\hline
\vphantom{$\int_A^B$}\small{Descriptors, quantum noumenal states \& evolution matrices} &Yes& Yes \\
\hline
\end{tabular}
%\caption{Modes of description and desirable properties}
\end{center}
%\end{table}
%\begin{center}
%\begin{tabular}{|l|c|c|c|c|}
%\hline 
%Descriptions &$A$&$B$& Local & Complete \\
%\hline \hline
%Reduced density matrices &$\rho^A$ &$\rho^B$ & Yes & No \\
%\hline
%Global wave function &$\ket{\Psi}^{AB}$&$\ket{\Psi}^{AB}$ &No& Yes \\
%\hline
%DH's descriptors &$\bol q_A$& $\bol q_B$ &Yes& Yes \\
%\hline
%BRR's noumenal states &$N^A$& $N^B$ & Yes & Yes \\
%\hline
%\end{tabular}
%\end{center}
%\com{Q Error correcting codes}
%On top of restoring local descriptions and shedding light on Bell-inequality violations and teleportation, Deutsch and Hayden's formalism has helped to established more elaborated quantum error correcting codes  \cite{beny2007generalization}.
 %\com{The problem is that the invalidation of the analogy is already done by WT. They precisely show that descriptors and wave funciton are not in one-to-one correspondence... So, since it cannot be a selling point of this paper, where to put this!?}
%This paper is organized as follows. 

In~\S\ref{secequiv}, a key result is proven: The three aforementioned approaches to quantum locality are formally equivalent. 
%After overviewing~(\S\ref{secpre}) DH's descriptors and RR's noumenal states, both in the abstract and quantum settings, equivalences among the formalisms are established~(\S\ref{secequiv}).
This equivalence uncovers an important drawback related to the amount of information encompassed in local descriptions. As investigated in~\S\ref{seccost}, the dimensionality of the state space of a system as tiny as a qubit is shown to scale exponentially with the size of the whole system considered. As it shall be argued, however, an exponential cost in terms of number of degrees of freedom was to be expected of any local and complete description of quantum systems. 
Finally, the local descriptions are shown to feature extra structures that supplement the universal wave function, as explored in~\S\ref{secwhatmore}.

%\noi This paper :
%\begin{itemize}
%\item Explain in DH's formalism in more details. (Perhaps a reason to the undercitation)
%\item Connecting with BRR. (Many faces)
%\item The Cost. Invalidates the nice analogy. Many many worlds.
%\item A continuous account.
%\end{itemize}
%
%
%Chapter 4 of Paul's doctoral thesis~\cite{raymond2018equivalence}, titled \emph{A Local-Realistic Model for Quantum Theory} with Gilles Brassard as a co-author.

%\newpage
\section{Preliminaries} \label{secpre}

The formalisms of descriptors, noumenal states and evolution matrices are briefly covered in this section. For an elementary and detailed introduction to the Deutsch--Hayden descriptors, see~\cite{bedard2020abc}.

%\com{The formalism by Deutsch and Hayden~\cite{deutsch2000information}, the abstract formalism of parallel lives~\cite{brassard2017equivalence} as well as its quantum instantiation~\cite[Ch.~4]{raymond2018equivalence} are briefly covered in this section. } %For more details on Brassard and Raymond Robichaud's work, see~\cite

\subsection{Descriptors}
\label{secdes}

Motivated by the quantum computation in the Heisenberg picture, the Deutsch--Hayden descriptors~\cite{deutsch2000information} encode the information of individual qubits in a pair of time-evolved observables. 
Let $\net$ be a computational network of~$n$ qubits initialized at time~$0$ in the Schr\"odinger state $\ket 0^{\otimes n}$, more conveniently denoted~$\ket 0$, which thereby determines the Heisenberg reference vector\footnote{Calling it a state is a misnomer, since its fixed value generally contains no information whatsoever about the qubits after they evolve. It is the time-evolved pair of observables, the descriptor, that carries the information.}. 
% No generality is lost by assuming such a fixed Heisenberg reference vector. For if it would be different, say $\ket{\Psi_0}$, then a network with reference vector $\ket 0$ could give rise to it by the appropriate evolution. 
The apparent lack of generality of the network~$\net$ (including its reference vector fixed to~$\ket 0$) is lifted by its ability to simulate any other quantum system with arbitrary accuracy~\cite{deutsch1989quantum} (including a network with a different reference vector). 
At time~$0$, the \emph{descriptor} of qubit~$i$ is given by
\bes
\bol q_i(0) = \one^{i-1} \otimes (\sigma_x, \sigma_z) \otimes \one^{n-i} \,,
\ees
where~$\sigma_x$ and~$\sigma_z$ are the corresponding Pauli matrices. The descriptor is therefore a vector of two\footnote{Deutsch and Hayden originally defined the descriptor with a third component,~$\sigma_y$. It is however redundant.} components, each of which being an operator acting on the whole network. 
Suppose that  between the discrete times $t-1$ and $t$, only one gate is performed, whose matrix representation on the whole network is denoted~$G_t$. Let $U= G_t V$, where $V$ consists of all gates from time $0$ to $t-1$. The descriptor of qubit~$i$ at time~$t$ is given by
\bes
\bol q_i(t) = U^\dagger \bol q_i(0) U \,.
\ees
%, and the descriptors of a subset~$I \subset \{1,2, \ldots, n\}$ is denoted~$\bol q_I(t)$.
  Alternatively, $\bol q_i(t)$ can be expressed as
\bes
\bol q_i(t) = \Uf^\dagger_{G_t}(\bol q(t-1)) \bol q_i(t-1) \Uf_{G_t}(\bol q(t-1)) \,,
\ees
where~$\bol q(\cdot)=(\bol q_1(\cdot), \dots, \bol q_n(\cdot))$ is the~$2n$-component object that encodes the descriptor of each qubit at the corresponding time and $\Uf_{G_t}(\cdot)$ is a fixed operator valued function of some components of its argument. The function satisfies the defining equation $\Uf_{G_t}(\bol q(0))=G_t$, which is guaranteed to exist by the ability for the components of $\bol q(0)$ to multiplicatively generate a basis of the operators acting on $\net$. 
In other words, any linear operator~$G_t$ can be expressed as a polynomial in the~$2n$ matrices~$q_{1x}(0), q_{1z}(0), \dots, q_{nx}(0), q_{nz}(0)$, and~$\Uf_{G_t}(\bol q(0))$ is one such polynomial.
%Its functionality encodes the multiplicative and linear generation of~$G$ by the elements of~$\bol q(0)$.  \com{Since the $2n$ components of $\bol q(0)$ can multiplicatively generate a basis of the operators acting on $\net$, any gate $G_t$ finds a defining function.}
The two displayed expressions for $\bol q_i(t)$ are indeed equivalent,
\beas
%\bol q_i(t) %&=& U^\dagger \bol q_i(0) U \\
 V^\dagger G_t^\dagger  \bol q_i(0) G_t V 
&=& V^\dagger \Uf^\dagger_{G_t}(\bol q(0)) V V^\dagger \bol q_i(0)V V^\dagger \Uf_{G_t}( \bol q(0) ) V \\
&=& \Uf^\dagger_{G_t}(V^\dagger \bol q(0) V) V^\dagger \bol q_i(0) V \Uf_{G_t}(V^\dagger \bol q(0) V)\\
&=& \Uf^\dagger_{G_t}(\bol q(t-1)) \bol q_i(t-1) \Uf_{G_t}(\bol q(t-1)) \,.
\eeas
The locality of the descriptors is recognized by the following. If the gate~$G_t$ acts only on qubits of the subset $I\subset \{1,2,\ldots,n\}$, then its functional representation~$\Uf_{G_t}$ shall only depend on components of~$\bol q_k(t-1)$, for~$k \in I$. So for any~$j \notin I$, the 
% components $q_{jx}(t-1)$ and $q_{jz}(t-1)$ 
descriptor~$\bol q_{j}(t-1)$ commutes with~$\Uf_{G_t}(\bol q(t-1))$, and thus remains unchanged between times~$t-1$ and~$t$.

When the constant reference vector~$\ket{0}$ is also taken into account, Deutsch and Hayden's descriptors are complete. The expectation value~$\bra{0} \calO(t) \ket{0}$ of any observable~$\calO(t)$ that concerns only qubits of~$I$ can be determined by the descriptors~$\bol q_k(t)$, with~$k\in I$.
This can be seen more clearly at time~$0$, where an observable on the qubits of~$I$ is, like the gate~$G_t$, a linear % (hermitian)
operator that acts non-trivially \emph{only} on the qubits of~$I$. Again, any such operator can be generated additively and multiplicatively by the components of~$\bol q_k(0)$, with~$k\in I$. So $\calO(0) = f_{\calO}(\{\bol q_k(0)\}_{k\in I})$ and
\bes
\calO(t) = U^\dagger \calO(0) U = f_{\calO}(\{U^\dagger \bol q_k(0) U\}_{k\in I}) = f_{\calO}(\{\bol q_k(t)\}_{k\in I})\,.
\ees

\subsection{Noumenal States}
%This section superficially presents the abstract formalism of the parallel lives theory established by Brassard and Raymond-Robichaud~\cite[Chapter 3]{raymond2018equivalence} \cite{brassard2017equivalence}.

The abstract formalism of noumenal states~\cite{brassard2017equivalence} applies to more general theories than quantum theory. In this setting,
\emph{systems} are supposed to form a boolean algebra. Specifically, the union and the intersection of systems are systems, and there exist a \emph{whole system}~$S$ and an \emph{empty system}~$\emptyset$ with respect to which systems can be complemented, \emph{i.e.}, $\bar A$ satisfies $\bar A \cup A = S$ and $\bar A \cap A = \emptyset$. 

To each system~$A$ is associated a \emph{noumenal state}~$N^A$, a ``real state of affairs'', from which a \emph{phenomenal state}~$\rho^A$ can be determined by a surjective function, $\varphi(N^A) = \rho^A$. The phenomenal state encompasses all that can be observed, which may be informationally coarser than the noumenal state. In quantum theory, the phenomenal state boils down to the density matrix of the system, justifying the notation.
%is determined by a injective function as $\varphi(N^A) = \rho^A$.

To system~$A$ is also associated a group of transformations~$\op A$ whose elements have an action on both noumenal\footnote{The action is faithful on noumenal states, which means that if $V \cdot N^A = \bar V \cdot N^A$ for all~$N^A$, then $V=\bar V$. } and phenomenal states. The function~$\varphi$ is promoted to a morphism, since it preserves the group action, namely, for any~$V \in \op A$,
\bes
\varphi (V \cdot N^A) = V \ast \rho^A \,,
\ees
where~$\cdot$ and ~$\ast$ denote the actions on noumenal and phenomenal states, respectively. The morphism~$\varphi$ also preserves the \emph{tracing out} of systems,
\bes
\varphi (\tr{B} N^{AB}) = \tr{B} \rho^{AB}\,,
\ees
where~$\tr{B}(\cdot)$ returns a state of system~$A$ from that of system~$AB$.
%Evolution and tracing out are merely paralleled by noumenal and phenomenal states, but the whole relevance of introducing noumenal states is to impose that these must be described locally.
 The existence of a \emph{join product}, denoted~$\odot$, assures that any noumenal state of a composite system~$AB$ can be obtained by joining the local descriptions of~$A$ and of~$B$, 
\bes
N^{AB} = N^A \odot N^B \,.
\ees
This joining property, however, needs not to be preserved at the phenomenal level.
See figure~\ref{diagramme} for diagrammatic representations. 
\begin{figure}[h!]
	\centering
	\begin{tikzpicture} %diagramme evolution
		\node (N) at (0,0){$N^A$};
		\node (UN) at (2,0) {$V \cdot N^A$};
		\draw[->] (N) -- (UN);
		\node at (0.9,0.3){$V$}; % dessus de la fleche
		
		\node (rho) at (0, -1.5){$\rho^A$};
		\node (Urho) at (2, -1.5){$V \ast \rho^A$};
		\draw[->] (rho) -- (Urho);
		\node at (0.9,-1.2){$V$}; % dessus de la fleche
		
		\draw[->] (N) -- (rho);
		\node at (-0.3,-0.8){$\varphi$};
		\draw[->] (UN) -- (Urho);
		\node at (1.7,-0.8){$\varphi$};
	\end{tikzpicture} \hspace{17pt}
	\begin{tikzpicture} %diagramme trace
		\node (N) at (0,0){$N^{AB}$};
		\node (UN) at (2,0) {$N^A$};
		\draw[->] (N) -- (UN);
		\node at (0.9,0.3){$\tr{B}$}; % dessus de la fleche
		
		\node (rho) at (0, -1.5){$\rho^{AB}$};
		\node (Urho) at (2, -1.5){$\rho^A$};
		\draw[->] (rho) -- (Urho);
		\node at (0.9,-1.2){$\tr{B}$}; % dessus de la fleche
		
		\draw[->] (N) -- (rho);
		\node at (-0.3,-0.8){$\varphi$};
		\draw[->] (UN) -- (Urho);
		\node at (1.7,-0.8){$\varphi$};
	\end{tikzpicture} \hspace{17pt}
	\begin{tikzpicture} %diagramme merge
		\node (NA) at (0,0){$N^{A}$};
		\node (NB) at (0.6,0.5){$N^{B}$};
		\node (NAB) at (2.5,0) {$N^{AB}$};
		\draw[->] (NA) -- (NAB);
		\draw[->] (NB) -- (NAB);
		\node at (1.11,0.18){$\odot$}; % dessus de la fleche
		
		\node (pA) at (0,-1.5){$\rho^{A}$};
		\node (pB) at (0.6,-1){$\rho^{B}$};
		\node (pAB) at (2.5,-1.5) {$\rho^{AB}$};
		\draw[->, dotted] (pA) -- (pAB);
		\draw[->, dotted] (pB) -- (pAB);
		\node at (1.11,-1.32){$\otimes$}; % dessus de la fleche
		
		\draw[->] (NA) -- (pA);
		\node at (-0.3,-0.8){$\varphi$};
		\coordinate (i1) at (0.6, 0.04);
		\coordinate (i2) at (0.6, -0.04);
		\draw (NB) -- (i1);
		\draw[->] (i2) -- (pB);
		\node at (0.3,-0.3){$\varphi$};
		\draw[->] (NAB) -- (pAB);
		\node at (2.2,-0.8){$\varphi$};
		
		% X rouge 
		\draw[very thick, red] (0.9,-1) -- (1.9,-1.7); 
		\draw[very thick, red] (0.9,-1.7) -- (1.9,-1);

%		\node (rho) at (0, -1.5){$\rho^{AB}$};
%		\node (Urho) at (2, -1.5){$\rho^A$};
%		\draw[->] (rho) -- (Urho);
%		\node at (0.9,-1.2){$\tr{B}$}; % dessus de la fleche
%		
%		\draw[->] (N) -- (rho);
%		\node at (-0.3,-0.8){$\varphi$};
%		\draw[->] (UN) -- (Urho);
%		\node at (1.7,-0.8){$\varphi$};
	\end{tikzpicture}
	\caption{Evolution and tracing out are merely paralleled by noumenal and phenomenal states. However, noumenal states are endowed with a join product which permits to describe the whole from the parts, a feature that phenomenal states may not exhibit.}
	\label{diagramme}
\end{figure}
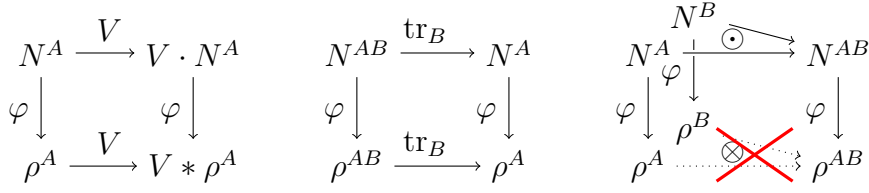

If~$V\in \op A$ and~$W\in \op B$, then the direct product $V\times W$, % \in \op{AB}$
 defined by its action on local noumenal states as 
\bes
\left(V\times W\right ) \cdot N^{AB} = \left (V \cdot N^A\right ) \odot\left ( W \cdot N^B \right)\,,
\ees
is required to be a valid operation on~$AB$.
Transformations $U$ and $U'$ on the whole system~$S$ are~\emph{equivalent with respect to~$A$}, denoted $U \sim^A U'$, if they are connected by a transformation that acts trivially on~$A$,
\be \label{eqeqclass}
U \sim^A U' \iff \exists W \in \op{\bar A} \suchthat  U' = (\one^A \times W) U \,.
\ee
The noumenal state space associated to system~$A$ is defined as the set of equivalence classes with respect to the~$\sim^A$ relation.  
%\bes
%\noumspace A \deq \{[U]^A \st U \in \op S \}\,, \com{Need?}
%\ees
A particular noumenal state is then
\bes
N^A = [U]^A \,,
\ees
which encodes what has happened to the whole system~$S$ since the beginning, up to evolutions that do not causally concern system~$A$. From such a definition of the noumenal states, evolution by~$V \in \op{A}$, tracing out and joining are respectively defined as
\be \label{eqabstractdefs}
V \cdot [U]^A \deq [(V \times \one^{\comp A}) U]^A \,, ~~~
 \tr{B} [U]^{AB}\deq[U]^A ~~~ \text{and} ~~~
 [U]^A \odot [U]^B \deq [U]^{AB} \,. 
\ee
Finally, the morphism~$\varphi$ depends upon a reference phenomenal state~$\rho_0$ of system~$S$, and is defined as
\be \label{eqabsmor}
\varphi([U]^A) \deq \tr{\comp A}(U \ast \rho_0) \,.
\ee 
For the purposes of this paper, this abstract formalism needs to be instantiated by quantum theory. 
Let~$A$ and~$\bar A$ be complementary subsystems of the whole system~$S$, with their respective Hilbert spaces satisfying~$\hil A \otimes \hil{\bar A} \simeq \hil S$. 
%For a given system~$A$ corresponds a Hilbert space~$\hil A$ and 
The group of transformation~$\op A$ is then the group of unitary transformation acting on~$\hil A$, denoted~$\Uni (\hil{A})$. The quantum counterpart of the equivalence class \eqref{eqeqclass} allows to define what I shall call the~\emph{quantum noumenal state} as 
\be\label{eqqns}[U]^A= \left \{U' \in \Uni (\hil{S}) \suchthat   U' = (\one^A \otimes W) U \text{ for some } W \in \Uni (\hil{\comp A}) \right \} \,.
\ee
The locality of noumenal states follows directly from their defining equivalence classes: Any operation~$W$ performed on systems other than~$A$ leaves invariant the noumenal state~$[U]^A$. The completeness follows from the join product~$\odot$ and the morphism~$\varphi$.

%\com{notation : I dont want to write~$[U]^A_{ij}$, because it is confusing. It supposes that the evolution matrix can be \emph{identified} to the equivalence class of~$U$, which, \emph{a pripri}, is not the case. Notation-wise, $N^A_{ij}$ suffers from its absence of~`$U$'. Maybe I could write~$\{U\}^A$, to make the dependence on~$U$ explicit?} %\com{One can verify that the noumenal state so defined satisfies the desired locality property: If~$Bob$ acts anyhow on anything but the system~$A$, then~$N^A$ stays invariant:
%\bes
%\left( \one^A \otimes V \cdot N^A \right)_{ij} =  U^\dagger (\one^A \otimes V^\dagger) \ketbra{j}{i}^A \otimes \one^{\comp{A}} (\one^A \otimes V) U
%\ees}

\subsection{Evolution Matrices}\label{secqpl}

%What culminates to be definitions in the abstract formalism is the starting point to formulate the quantum model of parallel lives.
Evolution matrices~\cite{raymond2020local} are yet another way to achieve quantum locality. 
Let again~$A$ be a subsystem of~$S$, and let~$\{\ket{i}^A\}$ be a basis of~$\hil A$. As before, let~$U$ be the operation that occurred on~$S$ between time~$0$ and time~$t$. The evolution matrix\footnote{In contrast with 
%The notation used here for the evolution matrix ---  --- differs from that of 
ref.~\cite{raymond2020local}, I denote evolution matrices with double square brackets instead of single square brackets. This is to underline the distinction with the quantum noumenal state. Moreover, I reserve the appellation ``noumenal state'' for the objects defined by equivalence classes. 
}~$\llbracket U \rrbracket^A$ of system~$A$ is a matrix of dimension~$\dim (\hil A)$ whose elements are given by
%In the quantum formalism, the noumenal state of system~$A$ is defined \cite{raymond2020local}%\cite[Chapter 4]{raymond2018equivalence}
%, not as an equivalence class; rather %with respect to a basis of~$\hil A$,
 %as an \emph{evolution matrix}, %$\llbracket U \rrbracket^A$
 %operators acting on the total system~$S$.
%\bes
% %\text{where} \qquad
% \llbracket U \rrbracket^A_{ij} = U^\dagger \ketbra{j}{i}^A \otimes \one^{\comp{A}} U \,.
%\ees
 \bes
 %N^A = \llbracket U \rrbracket^A \,,~~ \text{whose matrix elements are }~~ 
 \llbracket U \rrbracket^A_{ij} = U^\dagger (\ketbra{j}{i}^A \otimes \one^{\comp{A}}) U \,.
 \ees
%As opposed to Raymond-Robichaud, I shall reserve 
%Note that the evolution matrix is denoted differently than the equivalence class, because they are a priori different objects. %As in the abstract case,~$U$ is the operation that occurred on~$S$ between time~$0$ and time~$t$.
 The dependence of the evolution matrix on~$U$ is only up to the~$\sim^A$ equivalence relation. Indeed, if~$U' =(\one^A \otimes W) U$,
\bea \label{eq:preuvepaul}
\llbracket U' \rrbracket_{ij}^A 
&=& U'^\dagger (\ketbra{j}{i}^A \otimes \one^{\comp{A}}) U' \nonumber \\
&=& U^\dagger (\one^A \otimes W^\dagger) (\ketbra{j}{i}^A \otimes \one^{\comp{A}}) (\one^A \otimes W)U \nonumber \\
&=& U^\dagger (\ketbra{j}{i}^A \otimes \one^{\comp{A}}) U \nonumber \\
&=& \llbracket U \rrbracket_{ij}^A \,, 
\eea
and one finds the same evolution matrix. The invariance of the evolution matrix within the equivalence class~$[\cdot]^A$  is necessary but insufficient to~\emph{identify} the evolution matrix with the quantum noumenal state. 
%the equivalence class, defined as the noumenal state in the abstract formalism. 
%In Section~\ref{secequiv}, 
This identification shall be proved in the upcoming theorem~\ref{thmequiv2}. % will justify the identification by proving that the equivalence class is uniquely determined by the evolution matrix.
%The invariance of the evolution matrix within the equivalence class~$[\cdot]^A$ asserts that whatever operation is performed on other systems than $A$, then the description of system~$A$ by its evolution matrix shall remain unchanged.
Still, this invariance assures the locality of evolution matrices.

%In quantum theory, $\op{A}$ is the group of unitary transformations~$\Uni(\hil{A})$.
Let~$A$ and~$B$ be disjoint systems. Evolution by~$V \in \Uni(\hil{A})$, tracing out and joining are defined as
\beas 
\left( V \llbracket U \rrbracket^A \right)_{ij} &\deq& \sum_{mn} V_{im} \llbracket U \rrbracket^A_{mn} V^\dagger_{nj}  \\
%\ees
%\bes \tag{Q-Trace}
\left( \tr{B} \llbracket U \rrbracket^{AB} \right)_{ij} &\deq& \sum_k \llbracket U \rrbracket^{AB}_{ik;jk} \\
%\ees
%\bes \tag{Q-Merge}
\left( \llbracket U \rrbracket^{A} \odot \llbracket U \rrbracket^{B} \right)_{ik;jl} &\deq& \llbracket U \rrbracket^{A}_{ij} \llbracket U \rrbracket^{B}_{kl}   \,.
\eeas
%Compared with Eqns~\eqref{eqabstractdefs}, 
Phrased in the language of linear algebra, the above definitions differ from those of the abstract formalism, displayed in equations~\eqref{eqabstractdefs}, which instead find their evolution-matrix analogues in the following theorem.

\begin{theoA}[Raymond-Robichaud] \label{thmgilles}
Let~$A$ and~$B$ be disjoint systems and let~$V\in \emph{\Uni}(\hil{A})$. Then
\bes
V \llbracket U \rrbracket^A = \llbracket (V \otimes \one^{\comp A}) U \rrbracket^A \,,~~
 \tr{B} \llbracket U \rrbracket^{AB} \vphantom{ \left ( \llbracket^{AB} \right )}  = \llbracket U \rrbracket^{A} ~~~~\text{and} ~~~~
\llbracket U \rrbracket^{A} \odot \llbracket U \rrbracket^{B} = \llbracket U \rrbracket^{AB} \,.
\ees
\end{theoA}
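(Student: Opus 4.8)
\section*{Proof proposal}

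The three identities are all of the same nature: each asserts that an operation defined combinatorially on the \emph{entries} of an evolution matrix coincides with the evolution matrix of a transformed or composite unitary. The plan is therefore to verify each equality entry by entry, substituting the defining formula $\llbracket U \rrbracket^A_{ij} = U^\dagger(\ketbra{j}{i}^A \otimes \one^{\comp A})U$ and then collapsing the resulting sums using the completeness relation $\sum_k \ketbra{k}{k} = \one$ and the tensor structure of operators supported on complementary factors of $S$. Throughout, one works in the factorization $\hil{S} \simeq \hil A \otimes \hil B \otimes \hil{\comp{AB}}$, with $\one^{\comp A} = \one^B \otimes \one^{\comp{AB}}$.

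For the evolution identity, I would first expand the right-hand side: writing $W = (V\otimes \one^{\comp A})U$ gives $\llbracket W \rrbracket^A_{ij} = U^\dagger(V^\dagger \ketbra{j}{i}^A V \otimes \one^{\comp A})U$, so the projector is conjugated \emph{internally} by $V$. The left-hand side, by its definition, is $\sum_{mn} V_{im}\, U^\dagger(\ketbra{n}{m}^A\otimes\one^{\comp A})U\, V^\dagger_{nj}$; since the entries $V_{im}$ and $V^\dagger_{nj}$ are scalars, they can be pulled inside the conjugation by $U$, reducing the claim to the purely matricial identity $\sum_{mn} V_{im}\, \ketbra{n}{m}\, V^\dagger_{nj} = V^\dagger \ketbra{j}{i} V$, which I would settle by comparing $(a,b)$ matrix elements on both sides. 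This is the step that demands the most care, and I expect it to be the only genuine obstacle: one must track which summation index is the bra and which is the ket in $\ketbra{n}{m}$, because the pairing with $V_{im}$ on the left and $V^\dagger_{nj}$ on the right produces the \emph{reversed} conjugation $V^\dagger(\cdot)V$ rather than $V(\cdot)V^\dagger$ — a transpose-like inversion that is easy to get backwards.

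For the tracing-out identity, I would substitute $\llbracket U \rrbracket^{AB}_{ik;jk} = U^\dagger(\ketbra{j}{i}^A \otimes \ketbra{k}{k}^B \otimes \one^{\comp{AB}})U$ and sum over $k$, using $\sum_k \ketbra{k}{k}^B = \one^B$ and $\one^B \otimes \one^{\comp{AB}} = \one^{\comp A}$ to recover $\llbracket U \rrbracket^A_{ij}$. For the join identity, I would form the product $\llbracket U \rrbracket^A_{ij}\,\llbracket U \rrbracket^B_{kl}$; the factors $UU^\dagger$ cancel, and the two inner operators, padded with identities to act on all of $S$, take the forms $\ketbra{j}{i}^A\otimes\one^B\otimes\one^{\comp{AB}}$ and $\one^A\otimes\ketbra{l}{k}^B\otimes\one^{\comp{AB}}$. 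Being supported on complementary tensor factors, their product is simply $\ketbra{j}{i}^A\otimes\ketbra{l}{k}^B\otimes\one^{\comp{AB}}$, which is exactly the inner operator of $\llbracket U \rrbracket^{AB}_{ik;jl}$. Both of these are routine once the tensor bookkeeping is laid out, so the whole theorem reduces to careful substitution plus the single index-reversal subtlety flagged in the evolution case.
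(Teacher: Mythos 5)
The paper itself offers no proof of this theorem: it is stated as an imported result of Raymond-Robichaud (cited to his work on evolution matrices), so there is no internal argument to compare yours against. Judged on its own merits, your entrywise verification is correct, and it is the natural proof. In particular, the one step you flagged but deferred does close exactly as you predict: using $\sum_m V_{im}\bra{m} = \bra{i}V$ and $\sum_n V^\dagger_{nj}\ket{n} = V^\dagger \ket{j}$, the sum factorizes as
\begin{equation*}
\sum_{mn} V_{im}\, \ketbra{n}{m}\, V^\dagger_{nj} \;=\; \bigl(V^\dagger \ket{j}\bigr)\bigl(\bra{i} V\bigr) \;=\; V^\dagger \ketbra{j}{i} V \,,
\end{equation*}
which is precisely the internally conjugated operator in $\llbracket (V\otimes\one^{\comp A})U \rrbracket^A_{ij}$. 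The reversed conjugation $V^\dagger(\cdot)V$ that you rightly warn about is an artefact of the index swap already built into the definition $\llbracket U \rrbracket^A_{ij} = U^\dagger(\ketbra{j}{i}^A\otimes\one^{\comp A})U$, and it is what makes the matrix action $\sum_{mn}V_{im}(\cdot)_{mn}V^\dagger_{nj}$ consistent with left multiplication by $V\otimes\one^{\comp A}$ inside the conjugation by $U$. Your tracing-out computation (resolution of the identity on $B$, with $\one^B\otimes\one^{\comp{AB}}=\one^{\comp A}$) and your join computation (cancellation of $UU^\dagger$ followed by the product of operators supported on complementary tensor factors) are both correct and complete as described.
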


The morphism~$\varphi$ is defined from a fixed reference density matrix~$\rho_0$ as
\be \label{eqdefrho}
\left ( \varphi \llbracket U \rrbracket^A \right )_{ij} \deq \tr{} \left( \llbracket U\rrbracket^A_{ij}\rho_0 \right)\,.
\ee
Again, this definition differs from its abstract counterpart, equation~\eqref{eqabsmor}, which finds its analogue in the following theorem.
Moreover, the theorem verifies that the morphism $\varphi$ intertwines evolution and tracing out, so that these relations are in fact paralleled by evolution matrices and density matrices. %(\emph{i.e.} the counterpart of noumenal and phenomenal states). 
\begin{theoA}[Raymond-Robichaud] \label{thmepi}
Let~$A$ and~$B$ be disjoint systems and let~$V\in \emph{\Uni}(\hil{A})$. Then
\bes
\varphi \llbracket U \rrbracket^A = \tr{\comp A}(U\ast \rho_0) \,,~~
V \ast \varphi \llbracket U \rrbracket^A = \varphi (V\cdot \llbracket U \rrbracket^A) ~~\text{and}~~
\tr{B} \varphi \llbracket U \rrbracket^{AB} = \varphi \tr{B} \llbracket U \rrbracket^{AB} ,
\ees
where one recalls that in quantum theory, the action~$\ast$ is given by~\mbox{$U \ast \rho = U \rho U^\dagger$}.
\end{theoA}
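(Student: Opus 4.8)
The plan is to establish the three identities in order, starting with the first, which is the crux: it identifies $\varphi$ with the genuine physical partial trace. For the first identity I would compute the $(i,j)$ entry of the right-hand side directly. Writing $\sigma = U \ast \rho_0 = U \rho_0 U^\dagger$, the defining relation~\eqref{eqdefrho} together with the cyclicity of the trace gives
\[
\left(\varphi \llbracket U \rrbracket^A\right)_{ij} = \tr{}\!\left(U^\dagger (\ketbra{j}{i}^A \otimes \one^{\comp A}) U \rho_0\right) = \tr{}\!\left((\ketbra{j}{i}^A \otimes \one^{\comp A}) \sigma\right).
\]
The remaining step is the standard matrix-element characterization of the partial trace: expanding the trace in a product basis $\{\ket{m}^A \ket{k}^{\comp A}\}$ collapses the $A$-sum via $\langle m | j\rangle^A = \delta_{mj}$ and leaves exactly $\sum_k \bra{i}^A\bra{k}^{\comp A} \sigma \ket{j}^A \ket{k}^{\comp A} = \bra{i}^A (\tr{\comp A}\sigma) \ket{j}^A$, which is the $(i,j)$ entry of $\tr{\comp A}(U \ast \rho_0)$. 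This proves the first equation.

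For the remaining two identities, the key observation is that $\varphi$ acts \emph{entrywise} and \emph{linearly}: each entry of $\varphi \llbracket U \rrbracket^A$ is the fixed linear functional $X \mapsto \tr{}(X \rho_0)$ applied to the corresponding entry of $\llbracket U \rrbracket^A$. Both the evolution action and the tracing-out of evolution matrices are themselves defined entrywise by linear recombinations of entries, so $\varphi$ commutes with them for purely formal reasons. Concretely, for the second identity I would unfold the definition of the evolution action and use linearity of the trace,
\[
\left(\varphi (V \cdot \llbracket U \rrbracket^A)\right)_{ij} = \tr{}\!\left(\sum_{mn} V_{im} \llbracket U \rrbracket^A_{mn} V^\dagger_{nj}\, \rho_0\right) = \sum_{mn} V_{im} \left(\varphi \llbracket U \rrbracket^A\right)_{mn} V^\dagger_{nj},
\]
and the right-hand side is precisely the $(i,j)$ entry of $V (\varphi \llbracket U \rrbracket^A) V^\dagger = V \ast \varphi \llbracket U \rrbracket^A$. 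Equivalently, one may invoke theorem~\ref{thmgilles} to rewrite $V \cdot \llbracket U \rrbracket^A = \llbracket (V \otimes \one^{\comp A}) U \rrbracket^A$, apply the just-proven first identity, and pull the local $V$ out of $\tr{\comp A}$.

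The third identity is handled the same way, the only subtlety being the composite-index bookkeeping for $AB$. I would write out both sides from the definitions: on the right, $\varphi \tr{B} \llbracket U \rrbracket^{AB}$ has $(i,j)$ entry $\tr{}\big(\sum_k \llbracket U \rrbracket^{AB}_{ik;jk}\, \rho_0\big)$; on the left, $\tr{B} \varphi \llbracket U \rrbracket^{AB}$ has $(i,j)$ entry $\sum_k \big(\varphi \llbracket U \rrbracket^{AB}\big)_{ik;jk} = \sum_k \tr{}(\llbracket U \rrbracket^{AB}_{ik;jk}\, \rho_0)$. Both reduce to the same sum because $\varphi$, being a trace against $\rho_0$, is linear and commutes with the summation over the repeated $B$-index $k$ that defines the partial trace.

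I do not expect a genuine obstacle: the whole statement becomes formal once the first identity is in place. The only places demanding care are that first identity, where the product-basis expansion of $\tr{\comp A}$ must be carried out cleanly, and the index conventions in the third identity, where the entries of $\llbracket U \rrbracket^{AB}$ are labelled by the pair $(ik;jl)$ and the partial trace over $B$ must be read as the sum over the diagonal $B$-index.
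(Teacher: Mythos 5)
Your proof is correct, and it is essentially the only proof: the paper itself states this theorem without argument, attributing it to Raymond-Robichaud~\cite{raymond2020local}, so there is no in-paper proof to compare against. Your entrywise unfolding of the definitions --- cyclicity of the trace plus the product-basis computation of $\tr{\comp A}$ for the first identity, then linearity of the functional $X \mapsto \tr{}(X \rho_0)$ to commute $\varphi$ with the evolution action and with the sum over the repeated $B$-index for the other two --- is the standard route, and matches the style of the paper's one displayed computation of this kind, equation~\eqref{eq:preuvepaul}; your alternative derivation of the second identity via theorem~\ref{thmgilles} and the first identity is equally valid.
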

\section{Equivalences} \label{secequiv}

In this section, I show that the three approaches to quantum locality presented in~\S\ref{secpre} are all formally equivalent. First, descriptors and evolution matrices are related by a mere change of operator basis. Second, the formalism of evolution matrices can be understood as a linear-algebraic model of quantum noumenal states, because, at least for qubits, the objects can be put in one-to-one correspondance. Establishing these equivalences is a milestone to later investigate the cost of these local descriptions.
%\subsection{DH's Descriptor $\leftrightarrow$ RR's Evolution Matrix}
\subsection{Descriptors $\leftrightarrow$ Evolution Matrices} \label{secdesem}

The equivalence between the formalisms of descriptors and evolution matrices is shown by considering an~$n$-qubit computational network~$\net$, since descriptors are defined in this setting.
As previously mentioned, the universality of such a quantum system trumps its apparent lack of generality. Let $\mathfrak Q_k$ denote the~$k$-th qubit.
%To establish the equivalence between descriptors and evolution matrices, consider an~$n$-qubit computational network~$\net$, and let $\mathfrak Q_k$ denote the~$k$-th qubit. The apparent lack of generality to restrict the considered quantum system to a network of qubits is lifted by their ability to simulate any other quantum system with arbitrary accuracy~\cite{deutsch1989quantum}. % to considerate only those systems as proofs of principles.
%
%Consider a computational network~$\net$ of~$n$ qubits. 
At time~$t$, the descriptor of~$\mathfrak Q_k$ is given by
\bes
\bol q_k(t) %= U^\dagger \bol q_k(0) U 
= U^\dagger ( \one^{k-1} \otimes \sigma_x \otimes \one^{n-k} , \one^{k-1} \otimes \sigma_z \otimes \one^{n-k}) U \,,
\ees
while its evolution matrix is given by
\bes
\llbracket U \rrbracket^{\mathfrak{Q}_k}_{ij} = U^\dagger (\ketbra{j}{i} \otimes \one^{\comp{\mathfrak Q_k}}) U \,.
\ees
%\com{notation : $\one^{j-1} \otimes \sigma \otimes \one^{n-k}$}
In both cases,~$U$ is the unitary operator according to which the network has so far evolved, and despite the different notation, the identity operators are applied on the same subspaces.
%They can be seen to be informationally equivalent, namely, $\llbracket U \rrbracket^{\mathfrak{Q}_k}$ can be computed from $\bol q_k(t)$ and vice versa.
The objects $\bol q_k(t)$ and $\llbracket U \rrbracket^{\mathfrak{Q}_k}$ can easily be put in one-to-one correspondence, since they differ only by a change of operator basis; descriptors are expressed in the Pauli basis while evolution matrices are expanded in the canonical matrix basis. While the descriptor is composed of only two operators, $q_{kx}(t)$ and $q_{kz}(t)$, their multiplicative abilities permit the reconstruction of~$q_{ky}(t)= i q_{kx}(t) q_{kz}(t)$. Therefore,
\bes
\begin{array}{rclrcl}
&&&\qquad \llbracket U \rrbracket^{\mathfrak{Q}_k}_{11} &=& \dfrac{\one^{\otimes n} + q_{kz}(t)}{2} \\
q_{kx}(t) &=&  \llbracket U \rrbracket^{\mathfrak{Q}_k}_{12} + \llbracket U \rrbracket^{\mathfrak{Q}_k}_{21} 
&\llbracket U \rrbracket^{\mathfrak{Q}_k}_{12} &=& \dfrac{q_{kx}(t) - i q_{ky}(t)}{2} \\
q_{kz}(t)  &=& \llbracket U \rrbracket^{\mathfrak{Q}_k}_{11} - \llbracket U \rrbracket^{\mathfrak{Q}_k}_{22}
& \llbracket U \rrbracket^{\mathfrak{Q}_k}_{21} &=& \dfrac{q_{kx}(t) + i q_{ky}(t)}{2} \\
&&&\llbracket U \rrbracket^{\mathfrak{Q}_k}_{22} &=& \dfrac{\one^{\otimes n} - q_{kz}(t) }{2} \,.
\end{array}
\ees
%In particular, for all possible evolutions~$U$ and~$U'$ of~$\net$ to which respectively correspond the descriptors~$\bol q_k(t)$ and~$\bol q'_k(t)$,
%\bes
%\bol q_k(t) = \bol q'_k(t) ~\iff \llbracket U\rrbracket^{\mathfrak Q_k} = \llbracket U'\rrbracket^{\mathfrak Q_k} \,.
%\ees
%
%\beas
%q_{kz}(t)  = \llbracket U \rrbracket^{\mathfrak{Q}_k}_{11} - \llbracket U \rrbracket^{\mathfrak{Q}_k}_{22} \\
%q_{kx}(t) =  \llbracket U \rrbracket^{\mathfrak{Q}_k}_{12} + \llbracket U \rrbracket^{\mathfrak{Q}_k}_{21}  \\
%\llbracket U \rrbracket^{\mathfrak{Q}_k}_{11} = \frac{q_z + \one^{\otimes n}}{2} \\
%\llbracket U \rrbracket^{\mathfrak{Q}_k}_{22} = \one^{\otimes n} - \llbracket U \rrbracket^{\mathfrak{Q}_k}_{11} \\
%\llbracket U \rrbracket^{\mathfrak{Q}_k}_{12} = \frac{q_{kx}(t) - i q_{ky}(t)}{2} =  \com{finish} \\
%\llbracket U \rrbracket^{\mathfrak{Q}_k}_{21} = q_{kx}(t) - \llbracket U \rrbracket^{\mathfrak{Q}_k}_{12} \\
%\eeas

The connection to observations is also equivalent in both formalisms. 
As argued in \S\ref{secpre}\ref{secdes} with the Heisenberg reference vector, the reference density matrix~$\rho_0$ can without loss of generality be fixed to~$\ketbra 00$. In fact, purity can be consecrated in the Church of the larger Hilbert space, \emph{i.e.}, if $\rho_0$ is not pure, it can be considered as a part of a larger network in a pure state. Furthermore, altering the global evolution~$U$ permits to fix the reference state to~$\ketbra 00$, as if some initial part of~$U$ was to prepare the network in the reference state~$\rho_0$.

%that the resources required to prepare the initial state will automatically be taken into account

The reduced density matrix~$\rho(t)= \tr{\comp{\mathfrak Q}_k}(U \ketbra 00 U^\dagger)$ of~qubit $\mathfrak Q_k$ at time~$t$ can be expressed in the Pauli basis as
$$
\rho(t) = \frac 12 \left (\one + \sum_{w \in \{x,y,z\}} p_w(t)  \sigma_w \right ) \,.
$$
From the trace relations of Pauli matrices, the components~$p_{w}(t)$ are
\bes
p_{w}(t) = \tr{}(\rho(t) \sigma_w) = \tr{}\left (U \ketbra 00 U^\dagger (\one^{k-1} \otimes \sigma_w \otimes \one^{n-k}) \right) = \bra 0 q_{kw}(t)\ket 0 \,.
\ees
The second equality from the left comes from that $\rho^A \mapsto \rho^A \otimes \one^B$ is, as a super-operator, the adjoint of $\rho^{AB} \mapsto \tr{B} (\rho^{AB})$, and the rightmost equality follows from cyclicality of the trace.

In the evolution-matrix framework, one can instead expand the reduced density matrix in its canonical representation
$
\rho(t) = \sum_{ij} \rho_{ij}(t) \ketbra ij \,.
$
The matrix elements can be obtained as
\bes
\rho_{ij}(t)= \tr{}(\rho(t)\ketbra ji) = \tr{}\left(U \ketbra 00 U^\dagger (\ketbra ji \otimes \one^{\comp{\mathfrak Q}_k}) \right) = \tr{} \left( \llbracket U\rrbracket^{\mathfrak Q_k}_{ij}\ketbra{0}{0} \right)\,,
\ees
consistently with the definition of the morphism, equation~\eqref{eqdefrho}.
%\beas
%\tr{}(\rho(t) \sigma_w) &=& \tr{\mathfrak Q_k} \tr{\comp{\mathfrak Q}_k}(\rho(t) \otimes \one^{\bar k} \sigma_w \otimes \one^{\bar k}) \\
%&=& \tr{\mathfrak Q_k} \tr{\comp{\mathfrak Q}_k}(\rho(t) \otimes \one^{\bar k} \sigma_w \otimes \one^{\bar k}) \\
%&=& \tr{\mathfrak Q_k} \tr{\comp{\mathfrak Q}_k}(U \ketbra 00 U^\dagger  \sigma_w \otimes \one^{\bar k}) \\
%&=& \bra 0 U^\dagger q_{kw}(0) U   \ket 0  \\
%&=& \langle q_{kw}(t) \rangle
%\eeas
%The descriptor~$\bol q_k(t)$ permits to recover the expectation of any observable on qubit~$k$ that has evolved by~$U$. In the Schr\"odinger picture, this amounts to~$\tr{} \rho(t) \calO$, where~$\calO$ is the initial observable. In paricular, by chosing the Pauli matrices as observables, 
%The density matrix can be recovered in by
%\bes
%\rho_k(t) = \frac{\one}{2}  + \frac 12 \sum_{w\in \{x,y,z\}} q_{kw}(0) \bra 0 q_{kw}(t) \ket 0
%\ees

%\subsection{RR: Abstract $\rightarrow$ Quantum}
\subsection{Quantum Noumenal States $\leftrightarrow$ Evolution Matrices}

%The evolution matrix \emph{compatible} with the abstract definition of noumenal states, since it does not depend on the representative of the equivalence class~$[U]^A$. The following theorem states, that, for qubits, the evolution matrice can be identified with the equivalence class, which abstractly defines a noumenal state.
%In fact, if~$U' \sim_A U$, their evolution matrices coïncide :
%\beas
%[U']^A_{ij} &=& U'^\dagger \ketbra{j}{i}^A \otimes \one^{\comp{A}} U' \\
%&=& U^\dagger (\one^A \otimes V^\dagger) \ketbra{j}{i}^A \otimes \one^{\comp{A}} (\one^A \otimes V)U  \\
%&=& [U]^A \,.
%\eeas
% namely that a noumenal state defined in the quantum setting respects in fact the equivalence relation.
%In what authority can the equivalence class~$[U]^A$ be \emph{identified} to a so-called evolution matrix whose elements are $U^\dagger \ketbra{j}{i}^A \otimes \one^{\comp{A}} U$?
%In other terms, does the abstract definition, instanciated in the quantum setting, implies \com{In any cases, in would not really \emph{imply}, since many other representations would do!} the quantum evolution matrix so defined?

%The following theorem permits to identify, in the case of qubits, the equivalence classes defining abstract noumenal states with evolution matrices.

The following theorem permits to identify, in the case of qubits, quantum noumenal states with evolution matrices. They are two~\emph{a priori} different ways of representing the same concept, the latter being a linear-algebraic model of the former.

\begin{theoA} \label{thmequiv2}
Let~$\net$ be an~$n$-qubit computational network, and let $\mathfrak Q_k$ denote the~$k$-th qubit. For all possible evolutions~$U$ and~$U'$ of~$\net$,
\bes
[U]^{\mathfrak Q_k} = [U']^{\mathfrak Q_k}~\iff \llbracket U\rrbracket^{\mathfrak Q_k} = \llbracket U'\rrbracket^{\mathfrak Q_k} \,.
\ees
\end{theoA}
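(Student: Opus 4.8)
The plan is to prove the two implications separately, noting that the forward direction is essentially already contained in the computation~\eqref{eq:preuvepaul}. Indeed, if $[U]^{\mathfrak Q_k} = [U']^{\mathfrak Q_k}$, then by the definition of the equivalence class~\eqref{eqeqclass} there is a unitary $W \in \Uni(\hil{\comp{\mathfrak Q_k}})$ with $U' = (\one^{\mathfrak Q_k} \otimes W) U$. Substituting this into the definition of $\llbracket U' \rrbracket^{\mathfrak Q_k}$ and cancelling the $W$'s exactly as in~\eqref{eq:preuvepaul} yields $\llbracket U' \rrbracket^{\mathfrak Q_k} = \llbracket U \rrbracket^{\mathfrak Q_k}$, which is the $\Rightarrow$ direction. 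So essentially no new work is needed there.

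The substance lies in the converse. First I would assume $\llbracket U \rrbracket^{\mathfrak Q_k} = \llbracket U' \rrbracket^{\mathfrak Q_k}$, which unpacks, entry by entry, into
\[
U^\dagger (\ketbra{j}{i}^{\mathfrak Q_k} \otimes \one^{\comp{\mathfrak Q_k}}) U = U'^\dagger (\ketbra{j}{i}^{\mathfrak Q_k} \otimes \one^{\comp{\mathfrak Q_k}}) U'
\]
for every pair $i,j \in \{0,1\}$. Writing $M \deq U' U^\dagger$, which is a unitary on $\hil S$, and conjugating both sides by $U$, this condition becomes equivalent to the statement that $M$ commutes with each matrix unit $\ketbra{j}{i}^{\mathfrak Q_k} \otimes \one^{\comp{\mathfrak Q_k}}$.

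The key step is then a commutant argument. Since the four matrix units $\ketbra{j}{i}^{\mathfrak Q_k}$ span the whole algebra of operators acting on the single-qubit space $\hil{\mathfrak Q_k}$, commuting with all of them is the same as commuting with every operator of the form $O \otimes \one^{\comp{\mathfrak Q_k}}$. Expanding $M$ in block form, $M = \sum_{a,b} \ketbra{a}{b}^{\mathfrak Q_k} \otimes M_{ab}$, and imposing commutation with the off-diagonal units $\ketbra{0}{1}$ and $\ketbra{1}{0}$ forces the off-diagonal blocks to vanish, $M_{01} = M_{10} = 0$, and the diagonal blocks to coincide, $M_{00} = M_{11} =: W$. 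Hence $M = \one^{\mathfrak Q_k} \otimes W$, and the unitarity of $M$ makes $W \in \Uni(\hil{\comp{\mathfrak Q_k}})$. Therefore $U' = M U = (\one^{\mathfrak Q_k} \otimes W) U$, which is precisely $U \sim^{\mathfrak Q_k} U'$, that is, $[U]^{\mathfrak Q_k} = [U']^{\mathfrak Q_k}$.

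I expect the only genuine obstacle to be this commutant step --- showing that commuting with the full single-qubit operator algebra pins $M$ down to the rigid tensor form $\one^{\mathfrak Q_k} \otimes W$; everything preceding it is mere rearrangement. It is worth noting that the argument never invokes the dimension being two: the same reasoning goes through for any subsystem $A$ whose matrix units span all operators acting on $\hil{A}$, so the identification of quantum noumenal states with evolution matrices in fact persists in arbitrary finite dimension.
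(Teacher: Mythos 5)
Your proof is correct, but the converse is argued by a genuinely different route than the paper's. The paper proves ``$\Longleftarrow$'' by contraposition inside the descriptor formalism: assuming $[U]^{\mathfrak Q_k} \neq [U']^{\mathfrak Q_k}$, it writes $U' = MU$, notes that any functional representation $\Uf_M(\bol q(0))$ must then depend on components of $\bol q_k(0)$, and concludes that conjugation by $\Uf_M(\bol q(t))$ acts nontrivially on $\bol q_k(t)$, so the descriptor (equivalently, the evolution matrix) changes. You instead argue directly and purely linear-algebraically: equality of the evolution matrices says $M = U'U^\dagger$ commutes with every matrix unit $\ketbra{j}{i}^{\mathfrak Q_k} \otimes \one^{\comp{\mathfrak Q_k}}$, and your block computation identifies the commutant of $\mathcal{B}(\hil{\mathfrak Q_k}) \otimes \one^{\comp{\mathfrak Q_k}}$ as $\one^{\mathfrak Q_k} \otimes \mathcal{B}(\hil{\comp{\mathfrak Q_k}})$, forcing $M = \one^{\mathfrak Q_k} \otimes W$ with $W$ unitary. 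This buys you two things. First, rigor: the paper's step from ``$\Uf_M$ depends on $\bol q_k(0)$'' to ``$\bol q_k(t') \neq \bol q_k(t)$'' is exactly the commutant fact you prove explicitly --- an operator commuting with $q_{kx}(0)$ and $q_{kz}(0)$, which generate the full single-qubit algebra, must be of the tensor form $\one^{\mathfrak Q_k} \otimes W$ --- so your version makes precise what the paper leaves implicit. Second, generality: your closing observation that nothing uses dimension two recovers, with no extra work, what the paper relegates to a footnote as an extension requiring generalized Pauli matrices; the matrix-unit argument handles any finite-dimensional subsystem $A$ verbatim.
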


\begin{proof}
The ``$\implies$'' was established by Raymond-Robichaud, and is presented in equation \eqref{eq:preuvepaul} of \S~\ref{secpre}.
%$$
%U^\dagger \ketbra{j}{i}^A \otimes \one^{\comp{A}} U = U'^\dagger \ketbra{j}{i}^A \otimes \one^{\comp{A}} U'  \implies [U]^A = [U']^A \,?
%$$
%The most evident way of seing this is through the DH formalism. 
%Let $a$ be the qubit number of system~$A$. %and $\comp A$ be the~$n-1$ others. {\mathfrak Q_k}
%where $\bol q_a(0) = (X^a, Z^a)$ and $U$ is the operation acting on the total system between time~$0$ and $t$.
Thanks to the equivalence between descriptors and evolution matrices, $\llbracket U \rrbracket^{\mathfrak Q_k} $ can be equivalently represented by
$$
\bol q_k(t) = U^\dagger \bol q_k(0) U \,.
$$
 To prove the ``$\Longleftarrow$'', assume $[U]^{\mathfrak Q_k} \neq [U']^{\mathfrak Q_k} $ and therefore,
 $U' \neq (\one^{\mathfrak Q_k} \otimes W) U$, for some~$W$ acting on~$\comp{\mathfrak Q_k}$. Hence, $U' = M U$, for some global operator $M$, whose functional representation~$\Uf_M(\bol q(0))$ depends explicitly on terms of $\bol q_k (0)$. But then, if $M$ is thought to occur between time $t$ and $t'$,
% 
% 
% To prove the ``$\Longleftarrow$'', assume $[U]^{\mathfrak Q_k} \neq [U']^{\mathfrak Q_k} $ and therefore,~$U' = M U$, for some global operator~$M$ that acts non-trivially on $\mathfrak Q_k$. Hence, the functional representation~$\Uf_M(\bol q(0))$ depends explicitly on terms of $\bol q_k (0)$. But then, if $M$ is thought to occur between time $t$ and $t'$,
\beas
 \bol q_k (t') &=& U^\dagger M^\dagger \bol q_k(0) M U \\
&=& U^\dagger M^\dagger U U^\dagger \bol q_k(0) U U^\dagger M U \\
&=& \Uf^\dagger_M(\bol q(t)) \bol q_k(t) \Uf_M(\bol q(t)) \,.
%&\neq& \bol q_A(t) 
\eeas
But because of its dependence on $\bol q_k(t)$, $\Uf_M(\bol q(t))$ acts nontrivially on~$\bol q_k(t)$ which changes it to a $\bol q_k(t') \neq \bol q_k(t)$, \emph{i.e.},
$
\llbracket U \rrbracket^{\mathfrak Q_k}  \neq  \llbracket U' \rrbracket^{\mathfrak Q_k} 
$.
\end{proof}

%\com{Generalizing for more general than qubits?}
The previous theorem allows, at least for qubits\footnote{The analysis for qubits was eased by the formalism of descriptors, but the proof could be extended to systems of arbitrary dimensions. For such a system~$A$, the method is generalized by considering a generating set of %traceless
operators acting on~$A$ and~$\comp A$. This can be achieved, for instance, with a generalization of Pauli matrices.
}, to unify Raymond-Robichaud's noumenal states defined by equivalence classes with his evolution matrices. The equivalence of \S \ref{secequiv}\ref{secdesem} puts also Deutsch and Hayden's descriptors into the same picture, a picture we can refer to as quantum locality.

\section{The Cost of Quantum Locality} \label{seccost}

A standard measure of complexity of an object that can continuously vary is the dimensionality of the space to which it belongs, or its number of degrees of freedom. After the density-matrix space of a qubit is recalled, the descriptor spaces of a single qubit and of the whole network are investigated.

\subsection{Density-Matrix Space of a Qubit}

Consider first the well-known example of the density-matrix space of a single qubit~$\mathfrak Q_k$ within an~$n$-qubit network~$\net$. The geometric object that characterizes such a state space, notwithstanding the size of the total system to which it belongs, is a unit ball in $\bbR^3$. This comes from the one-to-one correspondence between the density matrices of a qubit and the points on and inside the Bloch sphere, \textit{i.e.,}
$$
\rho = \frac 12 (\one + \vect p \cdot \vect{\sigma}) \,,
$$
where the polarisation 3-vector $\vect p$ % and $\vect{\sigma} = (\sigma_1, \sigma_2, \sigma_3)$.
is constrained by $|\vect p|\leq1$. 
% The components of $\vect \sigma$ correspond to the respective Pauli matrices.
 The space of density matrices of a qubit ranges along with the range of $\vect p$, which is the unit ball in $\bbR^3$, 
$$
\mathsf{Density}^{\mathfrak Q_k} \simeq D^3= \{\vect p \in \bbR^3 \colon |\vect p| \leq 1 \}\,.
$$
In particular,~$\dim (\mathsf{Density}^{\mathfrak Q_k}) = 3$.

\subsection{Descriptor Space of a Qubit in an~$n$-Qubit Network}

How big is the descriptor space --- or equivalently, the noumenal space --- of a qubit?
%Let us now characterize the geometry of the noumenal state space of a qubit in a Universe of $N$ qubits.
Unlike the density-matrix space, the dimension of the descriptor space of a qubit scales (exponentially!) with the size of the whole system~$\net$ to which it belongs. The proof of this assertion involves basic notions of Lie groups, which are smooth manifolds\footnote{In the present context, the reader who is unfamiliar with smooth manifolds can instead think of the special case of regular hypersurfaces. A hypersurface of dimension $n$ is an object defined by~$m$ independent constraints in~$\bbR^{n+m}$, \mbox{$\{y \in \mathbb{R}^{n+m} : F^a(y)=0\,,\;\; a= 1, \dots, m \}$}.
It is \emph{regular} if the~$m \times (n+m)$ matrix with elements $ \partial F^a / \partial y^i$ has full rank in all points. For more on manifolds, see for instance Ref.~\cite{lee2013smooth}.} endowed with a group structure.

From the equivalences established in \S\ref{secequiv}, the descriptor space of a qubit can be identified to the space of equivalence classes that define quantum noumenal states.
%The equivalence relation is given by
%$$
%[ W ]^A \sim [W']^A \iff \text{There exist a unitary $V$ on $\bar A$ such that $W'= \one^A \otimes V^{\bar A}$.}
%$$
Define $H \subset \Uni(2^n)$ as the closed subgroup of operations of the form $\one^{\mathfrak Q_k} \otimes W$, where~\mbox{$W \in \Uni(2^{n-1})$} acts on~$\comp{\mathfrak Q_k}$. %It is a closed subgroup of~$\Uni(2^n)$, clearly identified to~$\Uni(2^{n-1})$. 
Therefore,
$$
\mathsf{Descriptor}^{\mathfrak Q_k \subseteq \net} \simeq \Uni \left(2^n \right)/ H %\simeq \Uni(2^n) / \text \Uni(2^{n-1})
 \,.
$$
For~$\bar k \neq k$, denote $\cnot_{\bar k \to k}$ the controlled-not gate in which qubit~$\bar k$ controls qubit~$k$ and denote~$N^{\bar k}$ the negation gate applied to~$\mathfrak Q_{\bar k}$. Then
$$
\cnot_{\bar k \to k} \left(\one^{\mathfrak Q_k} \otimes N^{\bar k}\right) \cnot_{\bar k \to k} \notin H \,,
$$
because it does not act trivially on $\mathfrak Q_{k}$, in particular, it changes $\ket{00}^{k\bar k}$ to $\ket{11}^{k \bar k}$. Because $\cnot$ is self-inverse, the above means that~$H$ is not a normal subgroup of~$\Uni(2^n)$, and so the quotient~$\Uni(2^n)/ H$ is not a group.
However, since $\Uni(2^n)$ is compact and its subgroup~$H$ is closed, the quotient manifold theorem~\cite[Thm 21.10]{lee2013smooth} is applicable and implies that the quotient $U(2^n)/H$ is a smooth manifold of dimension the difference of the dimensions of~$U(2^n)$ and~$H$.  
%However, the quotient of a Lie group by a compact Lie subgroup remains a manifold, whose dimension is the difference of the dimensions of the Lie groups involved in the quotient.
The group~$\Uni(N)$ has (real) dimension~$N^2$, because it is a hypersurface in $\bbC^{N^2} \simeq \bbR^{2N^2}$ subject to the~$N^2$ independent (real) constraints given by the real and imaginary parts of the system~$\sum_j u^*_{ji} u_{jk} = \delta_{ik}$, where~$i \le k$.  Since~$H \simeq \Uni(2^{n-1})$, one finds
\beas
	\dim\left(\mathsf{Descriptor}^{\mathfrak Q_k \subseteq \net}\right) 
	&=& \dim \left( \Uni(2^n) \right) - \dim \left( \Uni (2^{n-\tiny{1}}) \right)\\
	&=& 2^{2n} - 2^{2n-2}\\
	&=& \frac 34 \cdot 2^{2n} \,,
\eeas
and in particular, the dimension of the descriptor space scales exponentially with the size of the whole system~$\net$.
%This is very bad news for those who hoped to \emph{use} noumenal states to carry calculations in a real life contexts.
%However, it does not change the conceptual fact that there exist a mode of description that is local and realistic.

Compared to describing the three-dimensional reduced density matrix of a qubit, %~$\mathsf{Density}^{\mathfrak Q_k}$,
 if one instead faces the task of giving the~\emph{descriptor} of the same qubit, then one must feel like one has the Universe to describe.
This is in contradiction with the analysis by Hewitt-Horsman and Vedral~\cite[\S 3]{hewitt2007developing}, who claim that ``in general a given state defined by a density matrix has a unique representation in terms of Deutsch--Hayden operators''. This  statement hinges on a flaw in their analysis: In a nutshell, the number of constraints that determine a descriptor from a density matrix is overcounted, so the descriptor should be left underdetermined by the density matrix.
%\footnote{Dear examiner, I am unsure on how to proceed on this kind of ground. I feel like it is not worth it to explain in detail where is the flaw of their analysis, since it would be a sidetrack to the present paper. But I also find it rude to only claim that it is flawed, ``they are wrong and I am right, just believe me''. I could simply ignore their contribution, but this may not be the best option, scientifically. Your advice on that regard would be appreciated.}. 

Notice that for such an~$n$-qubit network~$\net$ as a whole system, the universal wave function~$\ket \Psi$, \emph{i.e.}, the Schr\"odinger state of the whole network, has dimensionality $2^{n+1}-2$. Indeed, the amplitudes are fixed by $2 \cdot 2^n$ real parameters, but the normalization and the irrelevance of a global phase cut down two parameters. Therefore, \emph{the descriptor of a single qubit has larger dimensionality than the Schr\"odinger state of the whole network --- or of the Universe!}

The cost of describing a single qubit by its descriptor seems tremendously large. %; to remind, it is exponential in the size of the whole network (or of the Universe).
Compared with the neat description of a qubit by its reduced density matrix, where only three parameters are needed, the exponential scaling seems catastrophic.
However, this high cost of describing quantum states in a local and complete way was to be expected. 
%Although the previous statement is surprising, one should not be astounded nor desperate by the exponential scaling of descriptors for single qubits, since it was to be expected.
 Indeed, a complete description should account for the necessary $2^{n+1}-2$ parameters of the universal wave function~$\ket\Psi$, and the most economical local repartition of those parameters into $n$ qubits must still leave $\sim 2^n/n$ parameters in each qubit! Thus the neat description of a qubit by its reduced density matrix might be practical to solely encode locally accessible information, but it is way too simplistic to carry with it the (noumenal) consequences of the complex web of previous interactions (entanglement) that fully describes a qubit.
%This many-to-one correspondence is what characterized by the injectivity of the \emph{epi}morphism~$\varphi$ between noumenal and phenomenal states.
% in which, as descriptive pieces of information, the descriptor and the usual quantum state may be respectively compared to a text or a digitalized image, and an algorithmically compressed representation that eliminates redundancy.
\subsection{The Universal Descriptor}

If the descriptor of a single qubit has larger dimensionality than that of the universal wave function, then how big is the space of the descriptor of the whole system, \emph{i.e.}, the universal descriptor? %namely, the descriptors of the whole system?
 It turns out that it is not much bigger than the descriptor space of a single qubit. 
The previous analysis can be paralleled, with~$\net$ as the considered system, whose complement is the empty system~$\emptyset$. Hence, the subgroup~$H$ is the set of operations of the form~$\one^{\net} \otimes e^{i\phi}$, which can be identified to~$\Uni(1)$. Consequently,
%where  As a special case of the previous analysis, by setting~$A=S$ and~$\comp A = \emptyset$, one finds,
\bes
\mathsf{Descriptor}^{\net} \simeq \Uni \left(2^n \right) / \Uni \left(1 \right) \disand \dim \left(\mathsf{Descriptor}^{\net} \right)= 2^{2n}-1 \,.
\ees
%The quotient by $\Uni(1)$, which resembles the ignorance of the global phase factor of usual Schr\"odinger states, can be seen as quotienting the operations on the empty system~$\emptyset$. 
Therefore, the universal descriptor is, up to a phase, the unitary operator that occurred on the whole system from time~$0$ up to now. In this case, $H$ is a normal subgroup of~$\Uni(2^n)$, so~$\mathsf{Descriptor}^{\net}$ keeps a group structure. %, namely, that of~$\SUni(2^n)/\mathbb Z_{2^n}$ \com{A verifier avec Jordan}.

A more pedestrian approach can also be used to establish that the local descriptions of all qubits provide the knowledge of the evolution operator~$U$, up to a phase. %This can be seen by putting together all the pieces of descriptions of the whole system. 
Indeed, from the descriptors or evolution matrices of each qubit of the network, one can multiplicatively and additively reconstruct $U^\dagger \ketbra{j}{i} U$ for all~$i$ and $j$, where $\{\ket i\}_{i=0}^{2^n-1}$ is a basis of~$\hil{\net}$. The matrix element~$\ell,k$ of~$U^\dagger \ketbra{j}{i} U$ is given by
\bes
\bra \ell U^\dagger \ketbra{j}{i} U \ket k = u^*_{j\ell} u_{ik} \,.
\ees
%where $u_{ik}$ denote the matrix element~$ik$ of $U$.
 By setting~$i=j=k=\ell=0$, one finds~$|u_{00}|^2$, which can be assumed to be non-zero by otherwise permuting the columns of~$U$. By setting~$j=\ell=0$, but leaving~$i$ and~$k$ free, one finds~$u^*_{00} u_{ik}$ for all~$i$ and~$k$. Therefore, up to a phase ($u^*_{00}/|u_{00}|$), $U$ can be computed from $U^\dagger \ketbra{j}{i} U$ for all~$i$ and~$j$, which can be computed from $\bol q_i(t)$ or $\llbracket U \rrbracket^{\mathfrak Q_i}$ for all~$i$.
 
 If the initial state is again denoted~$\ket 0$, the universal wave function is given, up to a phase, by
\bes
\ket \Psi = U \ket 0 \,.
\ees 
In the right basis, it corresponds to a single column of the universal descriptor, which up to a phase is $U$, so $U\ket{0'}$ for all possible initial state~$\ket{0'}$.
If the multiplicity of classical-like terms in Everett's \emph{universal wave function} has prompt DeWitt and Graham\footnote{
In his work~\cite{everett1973theory, everett1957relative}, Everett never referred to ``many worlds''.}~\cite{dewitt2015many} to coin the~\emph{many worlds interpretation}, then the multiplicity of Everett's states in the universal descriptor can be thought as \emph{many many} worlds, \emph{i.e.}, as many ``many worlds'' as there are dimensions in the Hilbert space.

\section{More than the Universal Wave Function} \label{secwhatmore}

 The many-to-one correspondence between the universal descriptor and the global Schr\"odinger state has already been pointed out by Wallace and Timpson~\cite{wallace2007non}. They argued that since the descriptors corresponding to the same Schr\"odinger state lead to the same observations, they should be equated by some ``quantum gauge equivalence''. In such a case, the description left out boils down again to the usual Schr\"odinger state, retrieving non-locality.  In response, Deutsch~\cite{deutsch2011vindication} attacks the premise and argues that the \emph{dynamics} that has lead to such an actual Schr\"odinger state, too, may manifest in observations. In~\S 5 of his paper, he proposes a way in which one can tell apart different descriptors that yield the same Schr\"odinger state. His proposal sums up to gate by gate process tomography, which is consistent with our identification of the universal descriptor to the evolution operator. Still, it should be mentioned that the precise network contains yet more information than the evolution operator, since for instance the latter cannot distinguish the application of a gate followed by its inverse from the identity network. 
Raymond-Robichaud, also aware of the \emph{non-injectivity} of the morphism~$\varphi$ between noumenal and phenomenal states, %(\emph{aka} descriptors and global density operators),
 holds an intermediate standpoint that crops up in his nomenclature. The whole point of his approach is to oppose to the Wallace--Timpson identification and authorize --- in the name of locality --- the existence of noumenal states as elements of reality even if different such states may give rise to the same observations, 
 %. They however recognize that different noumenal states may lead to the same observations,
  encompassed by the phenomenal state.

%Whether it doesn't exist because it doesn't crop up in observations, it exists and does crop up in observations or it exists but doesn't crop up in observations \com{au lieu d'ignorer cette question, tranche la!} questions remain:
But what is the extra information contained in the universal descriptor~$\bol q(t)$ that is unobtainable from the universal wave function~$\ket \Psi$ alone? What does it operationally feature? As previously mentioned, this supplementary information can be thought to encode the universal wave function for any initial state vector. This is out of reach for the wave function alone, since $\ket \Psi = U \ket 0$ is of no use to determine $\ket{\Psi'} = U \ket{0'}$ for a different initial state, with $\braket{0}{0'} = 0$. However, $\bol q(t)$ can be used to compute this alternative universal Schr\"odinger state~$\ket{\Psi'}$, or, more in hand with the Heisenberg picture, the expectation~$\bra{\Psi'} \calO \ket{\Psi'} $ of any observable. A computation as such can be done by first defining a unitary operator~$V$ such that $V \ket 0 = \ket{0'}$, which could be thought of being performed between ``time $0$'' and ``time $0'$''. Recalling that~$\calO$ can be reconstructed from the components of $\bol q(0)$, it suffices to track the evolution of the latter.
\beas
\bra{\Psi'} \bol q(0) \ket{\Psi'} &=& \bra{0'} U^\dagger \bol q(0) U \ket{0'} \\
&=& \bra 0 V^\dagger \Uf_U^\dagger(\bol q(0)) V V^\dagger \bol q(0) V V^\dagger \Uf_U(\bol q(0)) V \ket{0} \\
%&=& \bra 0 \Uf_U^\dagger(\bol q(0'))\, \Uf_V^\dagger(\bol q (0))\, \bol q(0)\, \Uf_V(\bol q(0))\,  \Uf_U(\bol q(0')) \ket 0 \\
&=& \bra 0 \Uf_U^\dagger(\bol q(0')) \, \bol q(0') \,\Uf_U (\bol q(0')) \ket 0 \,.
\eeas
Therefore, since $\bol q(t)=U^\dagger \bol q(0) U = \Uf_U^\dagger(\bol q(0)) \bol q(0) \Uf_U(\bol q(0))$ can be determined by a fixed function of $\bol q (0)$, then $V^\dagger U^\dagger \bol q(0) U V $ is determined by the same function, but instead evaluated on argument $\bol q(0')$. % = \Uf_V^\dagger(\bol q(0))\, \bol q(0)\, \Uf_V(\bol q(0))$.
This puts in evidence a feature that the extra structure of the descriptor provides over the wave function, namely, it enables to evolve the description of the system in both directions of time, simultaneously. On the one hand, adding a gate at the end of the network affects the outer shell, that is to say, the function that determines $\bol q(t+1)$ from $\bol q(0)$ will differ from that of~$\bol q(t)$. On the other hand, supplementing a gate at the beginning of the network changes the inner shell: The defining function of~$\bol q (t)$ remains the same, but it is instead applied to the argument~$\bol q (0')$ --- itself a function of~$\bol q (0)$.

The dynamics corresponding to any possible initializations of the network, encompassed by the universal descriptor but not by the universal wave function, may also show useful to address an elephant in the room of unitary quantum theory.
In conventional quantum information theory, qubits initialized in a state~$\ket 0$ are taken as a free entity; but how does one really get such an initialized qubit in a unitary quantum realm, where irreversible processes such as $\ket ? \to \ket 0$ don't happen? How does one even unentangles the qubit from the rest of the Universe? This may be referred to as the~\emph{preparation problem}, dual to the measurement problem. A parsimonious solution provides a mechanism that explains, from within unitary quantum theory, why computations can be done the usual way \emph{as if} the initial state of the network really was~$\ket 0$. Such an explanation relies on decoherence arguments, and in the larger unitary scheme, not only~$\ket 0$ ``goes through'' the network, justifying the need for more dynamics to be tracked. Hence, computing in the Schr\"odinger picture with a fixed initial state~$\ket 0$ amounts to computing in a realm in which one neglects the possible interference effects that other parts of the multiverse may cause on ours. This is very well justified when such a rapidly decohered system (like an experimentalist) prepares the initial qubits but also in other parts of the multiverse such as in a quantum computer that happens to be performing classical operations~\cite{deutsch2002structure}.
%of the network in the wanted state.

The complexity %--- or the non-compactness ---
 of the descriptor was investigated %in~\S\ref{seccost} 
 through the dimensionality of its space, well motivated in physics. However, a computer theoretic approach may regard as the complexity cost of the descriptor its difficulty in time, in space or in program size to produce it. An investigation as such should be hand in hand with network complexity, since the whole descriptor is but a compact representation of its generating network. Perhaps, also, new insights into quantum Kolmogorov complexity could be provided by the local approach to quantum theory. 
 
%If one is willing to pay Everett's price, and 
If like Everett one is willing to take quantum theory for what it is and
accepts that the~$n$-qubit Universe is encoded in an object~$\ket \Psi$ moving in~$2^{n+1} - 2$ dimensions, then one should without regrets square this number up to~$2^{2n} - 1$ and instead use the universal descriptor for an entirely local %This confonds the state with the dynamics and
story. One then faces the surprising consequence that more than~$3/4$ of the whole dimensionality resides in each qubit. Most of this information is locally inaccessible; it accounts for common histories among qubits, keeping track of who is entangled with whom. The consequence becomes more digestible when one appreciates how entangled the universe really is.
% And before backing off from the implications of a well-motivated paradigm shift, one reminds Wallace's advice~\cite{wallace2012emergent} :
%``The moral is clear: our intuitions as to what is `unreasonable' or `absurd' were formed to aid our ancestors scratching a living on the savannahs of Africa, and the Universe is not obliged to conform to them''. 
 %even a single real number in a qubit, after all, is as problematic as exponentially many when it comes to computational considerations. So she goes with it!
Bell stated~\cite{bell2004speakable} %[p.~201]
that ``Either the wave function, as given by the Schr\"odinger equation, is not everything, or it is not right''. It appears to be right in the limit of decohered prepared states, but not everything --- and completing it by the universal descriptor, which can be broken into spooky-action-free parts, is perhaps what Einstein, Podolsky and Rosen~\cite{einstein1935can} were looking for.

\vskip6pt

\enlargethispage{20pt}

%\ethics{This research is theoretic in nature and should have no concerns in terms if ethics.}

%\dataccess{This article has no additional data.}

%\aucontribute{Insert author contributions text here (to be included if more than one author).}

%\competing{I have no competing interests.}

\section*{Acknowledgements}
I am deeply grateful to Gilles Brassard for his well-meant guidance that valorized my research autonomy.
I am also grateful to Charles H. Bennett, Xavier Coiteux-Roy, David Deutsch, Pierre McKenzie, Jordan Payette
 and Paul Raymond-Robichaud for fruitful discussions and comments on earlier versions of this paper.
 I also wish to thank Stefan Wolf as well as the Institute for Quantum Optics and Quantum Information of Vienna, in particular Marcus Huber's group, for warm welcome and inspiring discussions.

This work was supported in part by the Fonds de recherche du Qu\'ebec -- Nature et technologie (FRQNT), the Swiss National Science Foundation (SNF), the National Centre for Competence in Research ``Quantum Science and Technology'' (NCCR \emph{QSIT}), the Natural Sciences and Engineering Research Council of Canada (NSERC) as well as Qu\'ebec's Institut transdisciplinaire d'information quantique (INTRIQ).

%\vskip2pc
\newpage

%\verb+
\bibliographystyle{unsrt} %%%% .BST file
%\verb+
\bibliography{refs} %%%%% .Bib file

\end{document}